\newcommand{\appref}[1]{\hyperref[#1]{{Appendix~\ref*{#1}}}}
\newcommand{\be}{\begin{eqnarray} \begin{aligned}}
\newcommand{\ee}{\end{aligned} \end{eqnarray} }
\newcommand{\benn}{\begin{eqnarray*} \begin{aligned}}
\newcommand{\eenn}{\end{aligned} \end{eqnarray*}}
\newcommand*{\cA}{\mathcal{A}}
\newcommand*{\cE}{\mathcal{E}}
\newcommand{\bc}{\begin{center}}
\newcommand{\ec}{\end{center}}
\newtheorem{theorem}{Theorem}
\newtheorem{lemma}{Lemma}
\def\01{\{0,1\}}
\begin{document}

\title{The classical capacity of quantum thermal noise channels to within 1.45 bits}

\author{Robert K\"onig}
\affiliation{IBM TJ Watson Research Center, 1101 Kitchawan Road, Yorktown Heights, NY 10598, USA}
\affiliation{Institute for Quantum Computing and Department of Applied Mathematics, University of Waterloo, Waterloo, ON, Canada}
\author{Graeme Smith}
\affiliation{IBM TJ Watson Research Center, 1101 Kitchawan Road, Yorktown Heights, NY 10598, USA}
\date{\today}

\begin{abstract}
We find a tight upper bound for the classical capacity of quantum thermal noise channels that is within $1/\ln 2$ bits
of Holevo's lower bound.  This lower bound is achievable using
unentangled, classical signal states, namely displaced coherent states.
Thus, we find that while quantum tricks might offer benefits, when it comes
to classical communication they can only help a bit.

\end{abstract}

\maketitle

Thermal noise affects almost all communication systems.  Even optical systems, where thermal photons are very unlikely at
room temperature, are effectively subjected to thermal noise by the noise inherent in amplification \cite{Caves82}.  
The Additive White Gaussian Noise (AWGN) channel describes classical systems subjected to thermal noise.  The communication capacity of this
channel, found by Shannon, is a central tool in classical information theory \cite{Shannon48}.  Finding the capacity of thermal noise channels with quantum
effects taken into account has long been recognized as an important question~\cite{Gordon64}.  The central issue is whether using special signal states, e.g., entangled
or non-classical  states, can boost capacity beyond strategies involving only unentangled classical states. Holevo has computed an achievable rate for thermal 
channels using displaced coherent states \cite{Holevo97}.  The purpose of this note is to show that the ultimate capacity 
of thermal channels differs from Holevo's rate by no more than $1/\ln 2\approx 1.45$ bits. 

The input to a thermal noise channel is a bosonic mode with field quadratures~$(Q,P)$.  This input interacts with
a thermal state, resulting in the channel's output.  We can model this interaction as a beam splitter with transmissivity~$\lambda$, so that letting~$(q,p)$ be the thermal environment's quadratures, the output's quadratures are~$(\sqrt{\lambda}Q + \sqrt{1-\lambda}q,\sqrt{\lambda}P + \sqrt{1-\lambda}p)$.  When the thermal state has mean photon number~$N_E$, we denote the channel by~$\cE_{\lambda,N_E}$ (see~\cite{EW05} for more on gaussian channels).

The classical capacity of a channel is the maximum rate at which information can be transmitted from sender to receiver
with errors vanishing in the limit of many uses.  It is measured in bits per channel use.  Typically, there is a mean photon number 
(or power)  constraint on the signal states used.  Holevo has shown, by a random coding argument, that the capacity of a thermal
noise channel with transmissivity $\lambda$, environment photon number $N_E$ with signal photon-number constraint $N$ satisfies
\begin{align}\label{Eq:Holevo}
C_N(\cE_{\lambda,N_E}) \geq \left(g(\lambda N+ (1-\lambda)N_E) - g((1-\lambda)N_E)\right) \frac{1}{\ln 2},
\end{align}
where $g(x) = (x+1)\ln(x+1) - x\ln x$.  This means that there exist good communication schemes with rates approaching the
right hand side.  Indeed, Holevo's coding scheme is remarkably simple and requires only displaced coherent 
states as signals \cite{HW01}.

In contrast to classical information theory \cite{Shannon48,CoverThomas}, for most quantum channels we do not know a simple expression for 
classical capacity.  This is because of the superadditivity of Holevo information \cite{Hastings09} and intimately 
related to the potential of using entangled signal states to boost capacity.  There are, however, a few channels for which 
classical capacity can be evaluated \cite{BDS97,Shor02,King03}.  
The pure loss channel, $\cE_{\lambda,0}$, the thermal noise channel with zero environment photon-number, is one such example, 
with capacity given by $C_N(\cE_{\lambda,0}) = g(\lambda N)/\ln 2$ \cite{GGLMSY04}.

We do not know the classical capacity of the general thermal noise channel, $\cE_{\lambda, N_E}$.  There are, however, 
some upper and lower bounds known.  Two upper bounds are the entanglement assisted capacity \cite{BSST99} and maximum output
entropy, both computed in \cite{HW01}.  The gap between both of these bounds and the lower bound, Eq.~\eqref{Eq:Holevo}, can
be arbitrarily large since it grows with $N$ \cite{HW01}.   Very recently, we have found stronger bounds based on the quantum
entropy power inequality \cite{KS12a,KS12b}.  For $\lambda = \frac{1}{2}$ these bounds differ from Eq.~\eqref{Eq:Holevo} by at most
$0.06$ bits, but for $\lambda \neq \frac{1}{2}$ they are looser.  There has been some hope that the capacity is simply Eq.~\eqref{Eq:Holevo},
a possibility explored in \cite{Giovannettietal04}, but no proof has been found.  Because that work focused primarily on the single-letter
minimum output entropy \cite{KR01}, unfortunately it does not lead to bounds on the capacity.

The capacity satisfies a pipelining property, $C_N(\cE_1 \circ \cE_2) \leq C_N( \cE_2)$, where $\cE_1 \circ \cE_2$ is the 
concatenation of $\cE_1$ and $\cE_2$.  This can be seen operationally by noting that one potential strategy for
communicating via $\cE_2$ is for the receiver to immediately apply $\cE_1$, so that any rate achievable over the concatenated channel
is also achievable over $\cE_2$ alone using the same signal states.  This property leads to the method of additive extensions~\cite{AE08},
where one finds upper bounds for the capacity of a channel $\cE$ by decomposing it as $\cE = \cE_1 \circ \cE_2$, where $\cE_2$ has an easily 
computed capacity.  We apply this to the thermal noise channel, which can be decomposed as
\begin{align}
\cE_{\lambda,N_E} &= \cA_{G} \circ \cE_{\tilde{\lambda},0}\ ,\label{eq:compositiongaintrans}
\end{align} where $\cA_G$ is an amplification channel with gain~$G=(1-\lambda)N_E+1$, and~$\cE_{\tilde{\lambda},0}$ is the pure loss channel with transmissivity~$\tilde{\lambda}  = \lambda/G$~\footnote{Since the channels~$\cA_G$ and $\cE_{\lambda,N_E}$ are gaussian, Eq.~\eqref{eq:compositiongaintrans} can be verified by considering their action~$\Gamma\mapsto G\Gamma+(G-1)I$ and $\Gamma\mapsto \lambda\Gamma+(1-\lambda)(2N_E+1)I$ on covariance matrices~$\Gamma$.}. Because~$C_N(\cE_{\tilde{\lambda},0})=g(\tilde{\lambda}N)/\ln 2$ is known~\cite{GGLMSY04}, this leads to the bound
\begin{align}\label{Eq:AE}
C_N(\cE_{\lambda,N_E}) \leq g\left(\frac{\lambda N}{(1-\lambda)N_E +1}\right)\cdot\frac{1}{\ln 2}\ .
\end{align}
The upper bound in Eq.~\eqref{Eq:AE} is remarkably tight.  Indeed, 
we find the following theorem.

\begin{theorem}\label{Eq:Main}
Let $\cE_{\lambda,N_E}$ be the thermal noise channel with transmissivity~$\lambda$ and environment photon number~$N_E$. Consider its classical capacity $C_N(\cE_{\lambda,N_E})$ subject to the signal photon-number constraint~$N$. Let 
\begin{align*}
\gamma(\lambda,N_E,N):=\left(g(\lambda N+ (1-\lambda)N_E) - g((1-\lambda)N_E)\right) \frac{1}{\ln 2}\ 
\end{align*}
denote the rate achievable by coding with displaced coherent states. Then
\begin{align*}
\gamma(\lambda,N_E,N)\leq C_N(\cE_{\lambda,N_E})\leq \gamma(\lambda,N_E,N)+1/\ln 2\ .
\end{align*}
\end{theorem}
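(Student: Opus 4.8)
Only the upper bound requires work, since the lower bound $\gamma(\lambda,N_E,N)\le C_N(\cE_{\lambda,N_E})$ is exactly Holevo's achievability statement~\eqref{Eq:Holevo}. The plan is to deduce the upper bound from the additive-extension estimate~\eqref{Eq:AE}. Writing $a:=\lambda N\ge 0$ and $b:=(1-\lambda)N_E\ge 0$, Eq.~\eqref{Eq:AE} reads $C_N(\cE_{\lambda,N_E})\le g\!\left(\frac{a}{b+1}\right)/\ln 2$, while the claimed bound $\gamma(\lambda,N_E,N)+1/\ln 2$ equals $\bigl(g(a+b)-g(b)+1\bigr)/\ln 2$. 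Hence it suffices to prove the scalar inequality
\begin{align}\label{eq:g-telescope}
g\!\left(\frac{a}{b+1}\right)+g(b)-g(a+b)\le 1\qquad(a,b\ge 0).
\end{align}

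To establish~\eqref{eq:g-telescope} I would use the elementary splitting $g(x)=\ln(x+1)+h(x)$, where $h(x):=x\ln(1+1/x)$ (and $h(0):=0$); this is immediate from the definition of $g$. Since $\frac{a}{b+1}+1=\frac{a+b+1}{b+1}$, inserting this splitting makes the logarithmic parts cancel,
\begin{align*}
g\!\left(\frac{a}{b+1}\right)+g(b)-g(a+b)=h\!\left(\frac{a}{b+1}\right)+h(b)-h(a+b),
\end{align*}
so only the $h$-terms remain to be controlled. Two one-line facts suffice: $h(x)\le 1$ for all $x\ge 0$, because $\ln(1+t)\le t$ gives $x\ln(1+1/x)\le 1$; and $h$ is nondecreasing, because $h'(x)=\ln(1+1/x)-\frac{1}{x+1}>0$ by the standard estimate $\ln(1+t)>\frac{t}{1+t}$ for $t>0$. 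Consequently $h\!\left(\frac{a}{b+1}\right)\le 1$ and $h(b)\le h(a+b)$ (as $b\le a+b$), which gives~\eqref{eq:g-telescope} and hence the theorem.

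I do not anticipate a real obstacle: the analytic work is already packaged in~\eqref{Eq:AE}, which in turn rests on the known pure-loss capacity $C_N(\cE_{\tilde\lambda,0})=g(\tilde\lambda N)/\ln 2$, the pipelining property $C_N(\cE_1\circ\cE_2)\le C_N(\cE_2)$, and the decomposition~\eqref{eq:compositiongaintrans}. The only mild subtlety is spotting the $g=\ln(\cdot+1)+h$ splitting that makes the logarithms telescope; absent that observation one could still prove~\eqref{eq:g-telescope} by fixing $b$ and maximizing its left-hand side over $a$ by calculus, but this is messier and less transparent. As a check that the additive gap $1/\ln 2$ is the best this route yields, note that the left-hand side of~\eqref{eq:g-telescope} tends to $1$ along $a\to\infty$ followed by $b\to\infty$, since then $h(a/(b+1))\to 1$ and $h(b)-h(a+b)\to h(b)-1\to 0$.
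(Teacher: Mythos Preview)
Your argument is correct. The overall strategy matches the paper's: both invoke Holevo's lower bound for the left inequality and reduce the right inequality to showing that the gap between Eq.~\eqref{Eq:AE} and Eq.~\eqref{Eq:Holevo} is at most one nat, i.e., that $g(X/(Y+1))+g(Y)-g(X+Y)\le 1$ with $X=\lambda N$, $Y=(1-\lambda)N_E$. Where you differ is in proving this scalar inequality. The paper proceeds by calculus: it shows $\Delta_Y''(X)<0$ and $\Delta_Y'(X)\to 0$ as $X\to\infty$, hence $\Delta_Y'>0$, and then combines monotonicity with the explicit limit $\Delta_Y(X)\to Y\ln(1+Y^{-1})$ to obtain the (slightly sharper) bound $\Delta_Y(X)<Y\ln(1+Y^{-1})<1$. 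Your splitting $g(x)=\ln(x+1)+h(x)$ with $h(x)=x\ln(1+1/x)$ makes the logarithmic parts telescope and reduces everything to the two one-line facts $h\le 1$ and $h$ nondecreasing; this is cleaner and avoids second derivatives. Incidentally, your decomposition also yields the paper's sharper estimate directly: since $a/(b+1)\le a+b$ and $h$ is nondecreasing, $h(a/(b+1))+h(b)-h(a+b)\le h(b)=b\ln(1+b^{-1})$.
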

To prove this theorem, we show that the difference between the
 upper bound Eq.~\eqref{Eq:AE} and Holevo's lower bound~\eqref{Eq:Holevo} does not exceed~$1/\ln 2$. This 
is an immediate consequence of Eq.~\eqref{it:upperboundproperty} of the following lemma, applied with $X=\lambda N$ and $Y=(1-\lambda)N_E$.
In fact, get the slightly stronger result that the gap is no more 
than $(1-\lambda)N_E\ln\left(1 + ((1-\lambda)N_E)^{-1}\right)/\ln 2$.

\begin{lemma}
Let $g(x)=(x+1)\ln (x+1)-x\ln x$. For $Y>0$, define the function
\begin{align*}
\Delta_Y(X)=g(X(Y+1)^{-1})-g(X+Y)+g(Y)\ .
\end{align*}
Then
\begin{enumerate}[(i)] 
\item  $\lim_{X\rightarrow\infty} \Delta_Y(X) = Y \ln (1 + Y^{-1})<1$. \label{it:limit}
\item $\Delta'_Y(X)>0$ for all $X> 0$.\label{it:derivativepositive}
\end{enumerate}
In particular, 
\begin{align}
\Delta_Y(X)< Y \ln (1 + Y^{-1})<1\qquad\textrm{ for all }X,Y> 0\ .\label{it:upperboundproperty}
\end{align}
\end{lemma}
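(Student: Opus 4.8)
The plan is to reduce both parts of the lemma to two elementary facts about $g$: the derivative $g'(x)=\ln(1+1/x)$ (immediate from $g'(x)=\ln(x+1)+1-\ln x-1$), and the exact identity
\begin{align*}
g(x)=\ln x+(x+1)\ln(1+1/x)\ ,
\end{align*}
obtained by collecting $(x+1)\ln(x+1)-x\ln x-\ln x=(x+1)\bigl(\ln(x+1)-\ln x\bigr)$.

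For part (ii) I would differentiate, getting $\Delta'_Y(X)=\tfrac{1}{Y+1}\ln\bigl(1+\tfrac{Y+1}{X}\bigr)-\ln\bigl(1+\tfrac{1}{X+Y}\bigr)$, and then substitute $X=(Y+1)/u$ with $u>0$ and set $c=Y/(Y+1)\in(0,1)$. Using the factorization $Y+1+uY+u=(Y+1)(1+u)$ one checks that $\ln\bigl(1+\tfrac{1}{X+Y}\bigr)=\ln(1+u)-\ln(1+cu)$, so the derivative collapses to
\begin{align*}
\Delta'_Y(X)=\ln(1+cu)-c\ln(1+u)=\ln\frac{1+cu}{(1+u)^{c}}\ ,
\end{align*}
and positivity becomes the strict Bernoulli inequality $(1+u)^{c}<1+cu$ for $u>0$, $0<c<1$ --- i.e.\ strict concavity of $t\mapsto t^{c}$ on $[0,\infty)$. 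I expect recognizing that the derivative reduces to this clean two-parameter form to be the only substantive step; the rest is a named inequality.

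For part (i) I would use the identity above to write
\begin{align*}
g\bigl(\tfrac{X}{Y+1}\bigr)-g(X+Y)=\ln\frac{X/(Y+1)}{X+Y}+\Bigl(\tfrac{X}{Y+1}+1\Bigr)\ln\bigl(1+\tfrac{Y+1}{X}\bigr)-(X+Y+1)\ln\bigl(1+\tfrac{1}{X+Y}\bigr)\ .
\end{align*}
As $X\to\infty$ the first term tends to $-\ln(Y+1)$ while each of the last two tends to $1$, so the difference tends to $-\ln(Y+1)$ and $\lim_{X\to\infty}\Delta_Y(X)=g(Y)-\ln(Y+1)=(Y+1)\ln(Y+1)-Y\ln Y-\ln(Y+1)=Y\ln(1+Y^{-1})$. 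The bound $Y\ln(1+Y^{-1})<1$ is then just $\ln(1+t)<t$ at $t=1/Y$.

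Finally I would combine: by (ii) $\Delta_Y$ is strictly increasing on $(0,\infty)$, and by (i) it has the finite limit $L:=Y\ln(1+Y^{-1})$ at $+\infty$; a strictly increasing function with a finite limit at $+\infty$ lies strictly below that limit, giving $\Delta_Y(X)<L<1$ for all $X,Y>0$, which is Eq.~\eqref{it:upperboundproperty}. The same computation, kept in units of $\ln 2$ and with $Y=(1-\lambda)N_E$, yields the sharper gap estimate quoted after the theorem.
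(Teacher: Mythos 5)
Your proof is correct, and for part (ii) it takes a genuinely different route from the paper's. The paper also computes $\Delta_Y'(X)=(Y+1)^{-1}\ln\bigl(1+(Y+1)X^{-1}\bigr)-\ln\bigl(1+(X+Y)^{-1}\bigr)$, but then establishes its positivity indirectly: it computes $\Delta_Y''(X)=\frac{1}{X+Y+1}\bigl(\frac{1}{X+Y}-\frac{1}{X}\bigr)<0$, so $\Delta_Y'$ is decreasing, and combines this with $\lim_{X\to\infty}\Delta_Y'(X)=0$ to conclude $\Delta_Y'>0$. You instead prove positivity pointwise: the substitution $X=(Y+1)/u$, $c=Y/(Y+1)$ collapses the derivative to $\ln(1+cu)-c\ln(1+u)$, and positivity is exactly strict Bernoulli, $(1+u)^c<1+cu$. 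Your algebra checks out ($Y+1+uY=(Y+1)(1+cu)$ and $Y+1+uY+u=(Y+1)(1+u)$), and $c\in(0,1)$, $u>0$ hold in the relevant range. What each buys: the paper's second-derivative computation is shorter to write and purely mechanical, but it needs two ingredients (concavity of $\Delta_Y$ plus the vanishing of $\Delta_Y'$ at infinity); your reduction is a single self-contained inequality and arguably explains \emph{why} the derivative is positive, at the cost of finding the right change of variables. For part (i) the two arguments are essentially the same — the paper uses the expansion $g(x)=\ln(x+1)+1+O(1/x)$ where you use the exact identity $g(x)=\ln x+(x+1)\ln(1+1/x)$ and take limits term by term, which is marginally more careful. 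The final step (a strictly increasing function lies strictly below its finite limit at $+\infty$) and the sharper gap $Y\ln(1+Y^{-1})$ with $Y=(1-\lambda)N_E$ coincide with the paper's.
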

\begin{proof}
Using~$g(x)=\ln(x+1)+1+O(1/x)$, one immediately gets
\begin{align*}
\Delta_Y(X)&=\ln\left(X(Y+1)^{-1}+1\right)-\ln(X+Y+1)+g(Y)+O(1/X)\\
&\rightarrow -\ln (Y+1)+g(Y)=Y\ln(1+Y^{-1})\qquad\textrm{ for }X\rightarrow\infty\ .
\end{align*}
Statement~\eqref{it:limit} then follows from the fact that $\ln(1+\epsilon) \leq |\epsilon|$.
Similarly, with $g'(x)=\ln(1+x^{-1})$, we obtain
\begin{align}
\Delta_Y'(X)&=(Y+1)^{-1}\ln(1+(Y+1)X^{-1})-\ln(1+(X+Y)^{-1})\rightarrow 0\qquad\textrm{ for }X\rightarrow\infty\ .\label{eq:limitderivativezero}
\end{align}
Finally, we compute the second derivative of $\Delta_Y(\cdot)$ using $g''(x)=-(x(x+1))^{-1}$. Simple algebra gives
\begin{align*}
\Delta_Y''(X)&=\frac{1}{X+Y+1}\left(\frac{1}{X+Y}-\frac{1}{X}\right)< 0\qquad\textrm{ for all }X>0\ ,
\end{align*}
which shows that~$\Delta'_Y(\cdot)$ is decreasing. With Eq.~\eqref{eq:limitderivativezero}, this implies~\eqref{it:derivativepositive}.
\end{proof}

{\it Acknowledgments---} We are grateful to Mark Wilde for comments on the manuscript.  We were both 
supported by DARPA QUEST program under contract no.HR0011-09-C-0047.

\bibliographystyle{apsrev}

\end{document}